\keywords{Process algebra; finite automata; regular expressions; operational semantics}
\tikzset{
  ->, 
  >=stealth', 
  node distance=3cm, 
  every state/.style={thick, fill=gray!10}, 
  initial text=$ $, 
}
\newcommand{\derives}[1]{\xrightarrow{#1}}
\newcommand{\lang}{\mathcal{L}}
\newcommand{\out}{\textit{out}}
\newcommand{\rexp}{\mathcal{R}}
\newcommand{\setof}[2]{\{ #1 \mid #2 \}}
\newcommand{\boxit}[1]{\begin{array}{|c|} \hline\\[-2.2ex] #1 \\\hline \end{array}} 
\newcommand{\postrule}[3]{
			\boxit{\begin{array}{c} #1 \\\hline\\[-2.2ex] #2 \end{array} \;#3}}
\theoremstyle{plain} 
\begin{document}

\title[Better Automata through Process Algebra]{Better Automata through Process Algebra}

\author[R.~Cleaveland]{Rance Cleaveland}	
\address{Department of Computer Science, University of Maryland, College Park MD 20742 USA}	
\email{rance@cs.umd.edu}  
\thanks{Research supported by US Office of Naval Research Grant N000141712622.}	

%





\begin{abstract}
  \noindent This paper shows how the use of Structural Operational Semantics (SOS) in the style popularized by the process-algebra community can lead to a more succinct and useful construction for building finite automata from regular expressions.  Such constructions have been known for decades, and form the basis for the proofs of one direction of Kleene's Theorem.  The purpose of the new construction is, on the one hand, to show students how small automata can be constructed, without the need for empty transitions, and on the other hand to show how the construction method admits closure proofs of regular languages with respect to other operators as well.  These results, while not theoretically surprising, point to an additional influence of process-algebraic research:  in addition to providing fundamental insights into the nature of concurrent computation, it also sheds new light on old, well-known constructions in automata theory.
\end{abstract}

\maketitle

\section{Introduction}\label{S:one}

It is an honor to write this paper in celebration of Jos Baeten on the occasion of the publication of his \emph{Festschrift}.  I recall first becoming aware of Jos late in my PhD studies at Cornell University.  Early in my doctoral career I had become independently interested in process algebra, primarily through Robin Milner's original monograph, \emph{A Calculus of Communicating Systems}~\cite{Milner80}, and indeed wound up writing my dissertation on the topic.  I was working largely on my own; apart from very stimulating interactions with Prakash Panangaden, who was at Cornell at the time, there were no researchers in the area at Cornell.  It was in this milieu that I stumbled across the seminal papers by Jos' colleagues, Jan Bergstra and Jan Willem Klop, describing the Algebra of Communicating Processes~\cite{BERGSTRA1984109,bergstra1985algebra}.   I was impressed with their classically algebraic approach, and their semantic accounts based on graph constructions.  This, together with Milner's focus on operational semantics and the Communicating Sequential Processes community's on denotational semantics~\cite{BrookesHR84}, finally enabled me to truly understand the deep and satisfying links between operational, denotational and axiomatic approaches to not only process algebra, but to program semantics in general.

While Jos was not a co-author of the two papers just cited, he was an early contributor to the process-algebraic field and has remained a prolific researcher in both theoretical and applied aspects of the discipline. I have followed his career, and admired his interest in both foundational theory and practical applications of process theory, since completing my PhD in 1987.  It is this broader view on the impact of process algebra that is the motivation for this note.  Indeed, I will not focus so much on new theoretical results, satisfying though they can be.  Rather, I want recount a story about my usage of process-algebra-inspired techniques to redevelop part of an undergraduate course on automata theory that I taught for a number of years.  Specifically, I will discuss how I have used the Structural Operational Semantics (SOS) techniques used extensively in process algebra to present what I have found to be more satisfying ways than those typically covered in textbooks to construct finite automata from regular expressions.  Such constructions constitute a proof of one half of Kleene's Theorem~\cite{kleene1956}, which asserts a correspondence between regular languages and those accepted by finite automata.

In the rest of this paper I present the construction and contrast it to the constructions found in classical automata-theory textbooks such as~\cite{hopcroftMU2006}, explaining why I find the work presented here preferable from a pedagogical point of view.  I also briefly situate the work in the setting of an efficient technique~\cite{berry1986regular} used in practice for converting regular expressions to finite automata.  The messsage I hope to convey is that in addition to contributing foundational understanding to notions of concurrent computation, process algebra can also cast new light on well-understood automaton constructions as well, and that pioneers in process algebra, such as Jos Baeten, are doubly deserving of the accolades they receive from the research community.


\section{Alphabets, Languages, Regular Expressions and Automata}

This section reviews the definitions and notation used later in this note for formal languages, regular expressions and finite automata.  In the interest of succinctness the definitions depart slightly from those found in automata-theory textbooks, although notationally I try to follow the conventions used in those books.

\subsection{Alphabets and Languages}

At their most foundational level digital computers are devices for computing with symbols.  Alphabets and languages formalize this intuition mathematically.

\begin{defi}[Alphabet, word]
\hfill
\begin{enumerate}
\item
An \emph{alphabet} is a finite non-empty set $\Sigma$ of symbols.
\item
A \emph{word} over alphabet $\Sigma$ is a finite sequence $a_1\ldots a_k$ of elements from $\Sigma$.  We say that $k$ is the \emph{length} of $w$ in this case.  If $k = 0$ we say $w$ is \emph{empty}; we write $\varepsilon$ for the (unique) empty word over $\Sigma$.  Note that every $a \in \Sigma$ is also a (length-one) word over $\Sigma$.  We write $\Sigma^*$ for the set of all words over $\Sigma$.
\item
If $w_1 = a_1 \ldots a_k$ and $w_2 = b_1 \ldots b_\ell$ are words over $\Sigma$ then the \emph{concatenation}, $w_1 \cdot w_2$, of $w_1$ and $w_2$ is the word $a_1 \ldots a_k b_1 \ldots b_n$.  Note that $w \cdot \varepsilon = \varepsilon \cdot w = w$ for any word $w$.  We often omit $\cdot$ and write $w_1w_2$ for the concatenation of $w_1$ and $w_2$.
\item
A \emph{language} $L$ over alphabet $\Sigma$ is a subset of $\Sigma^*$.  The set of all languages over $\Sigma$ is the set of all subsets of $\Sigma^*$, and is written $2^{\Sigma^*}$ following standard mathematical conventions.
\end{enumerate}
\end{defi}

Since languages over $\Sigma^*$ are sets, general set-theoretic operations, including $\cup$ (union), $\cap$ (intersection) and $-$ (set difference) may be applied to them.  Other, language-specific operations may also be defined.

\begin{defi}[Language concatenation, Kleene closure]
Let $\Sigma$ be an alphabet.
\begin{enumerate}
\item
Let $L_1, L_2 \subseteq \Sigma^*$ be languages over $\Sigma$.  Then the \emph{concentation}, $L_1 \cdot L_2$, of $L_1$ and $L_2$ is defined as follows.
\[
L_1 \cdot L_2 = \{ w_1 \cdot w_2 \mid w_1 \in L_1 \textnormal{ and } w_2 \in L_2 \}
\]
\item
Let $L \subseteq \Sigma^*$ be a language over $\Sigma$.  Then the \emph{Kleene closure}, $L^*$, of $L$ is defined inductively as follows.\footnote{Textbooks typically define $L^*$ differently, by first introducing $L^i$ for $i \geq 0$ and then taking $L^* = \bigcup_{i=0}^\infty L^i$}
\begin{itemize}
\item $\varepsilon \in L^*$
\item If $w_1 \in L$ and $w_2 \in L^*$ then $w_1 \cdot w_2 \in L^*$.
\end{itemize}
\end{enumerate}
\end{defi}

\subsection{Regular Expressions}

\emph{Regular expressions} provide a notation for defining languages.

\begin{defi}[Regular expression]
Let $\Sigma$ be an alphabet.  Then the set, $\rexp(\Sigma)$, of \emph{regular expressions} over $\Sigma$ is defined inductively as follows.
\begin{itemize}
\item $\emptyset \in \rexp(\Sigma)$.
\item $\varepsilon \in \rexp(\Sigma)$.
\item If $a \in \Sigma$ then $a \in \rexp(\Sigma)$.
\item If $r_1 \in \rexp(\Sigma)$ and $r_2 \in \rexp(\Sigma)$ then $r_1 + r_2 \in \rexp(\Sigma)$ and $r_1 \cdot r_2 \in \rexp(\Sigma)$.
\item If $r \in \rexp(\Sigma)$ then $r^* \in \rexp(\Sigma)$.
\end{itemize}
\end{defi}

It should be noted that $\rexp(\Sigma)$ is a set of expressions; the occurrences of $\emptyset, \varepsilon, +, \cdot$ and $^*$ are symbols that do not innately possess any meaning, but must instead be given a semantics.
This is done by interpreting regular expressions mathematically as languages.  The formal definition takes the form of a function, $\lang \in \rexp(\Sigma) \rightarrow 2^{\Sigma^*}$ assigning a language $\lang(r) \subseteq \Sigma^*$ to regular expression $r$.

\begin{defi}[Language of a regular expression, regular language]~\label{def:rexp-language}
Let $\Sigma$ be an alphabet, and $r \in \rexp(\Sigma)$ a regular expression over $\Sigma$.  Then the \emph{language}, $\lang(r) \subseteq \Sigma^*$, associated with $r$ is defined inductively as follows.
\[
\lang(r) =
\left\{
\begin{array}{lp{5cm}}
\emptyset
	& if $r =\emptyset$
	\\
\{ \varepsilon \}
	& if $r = \varepsilon$
	\\
\{ a \}
	& if $r = a$ and $a \in \Sigma$
	\\
\lang(r_1) \cup \lang(r_2)
	& if $r = r_1 + r_2$
	\\
\lang(r_1) \cdot \lang(r_2)
	& if $r = r_1 \cdot r_2$
	\\
(\lang(r'))^*
	& if $r = (r')^*$
\end{array}
\right.
\]

\noindent
A language $L \subseteq \Sigma^*$ is \emph{regular} if and only if there is a regular expression $r \in \rexp(\Sigma)$ such that $\lang(r) = L$.
\end{defi}

\subsection{Finite Automata}

Traditional accounts of finite automata typically introduce three variations of the notion:  deterministic (DFA), nondeterministic (NFA), and nondeterministic with $\varepsilon$-transitions (NFA-$\varepsilon$).  I will do the same, although I will do so in a somewhat different order than is typical.

\begin{defi}[Nondeterministic Finite Automaton (NFA)]\label{defi:nfa}
A \emph{nondeterministic finite automata} (NFA) is a tuple $(Q, \Sigma, \delta, q_I, F)$, where:
\begin{itemize}
\item
$Q$ is a finite non-empty set of \emph{states};
\item
$\Sigma$ is an \emph{alphabet};
\item
$\delta \subseteq Q \times \Sigma \times Q$ is the \emph{transition relation};
\item
$q_I \in Q$ is the \emph{initial state}; and
\item
$F \subseteq Q$ is the set of \emph{accepting}, or \emph{final}, states.
\end{itemize}
\end{defi}

This definition of NFA differs slightly from e.g.\/~\cite{hopcroftMU2006} in that $\delta$ is given as relation rather than  function in $Q \times \Sigma \rightarrow 2^Q$.  It also defines the form of a NFA but not the sense in which it is indeed a machine for processing words in a language.  The next definition does this by associating a language $\lang(M)$ with a given NFA $M = (Q, \Sigma, \delta, q_I, F)$.

\begin{defi}[Language of a NFA]\label{defi:nfa-language}
Let $M = (Q, \Sigma, \delta, q_I, F)$ be a NFA.
\begin{enumerate}
\item
Let $q \in Q$ be a state of $M$ and $w \in \Sigma^*$ be a word over $\Sigma$.  Then $M$ \emph{accepts} $w$ from $q$ if and only if one of the following holds.
\begin{itemize}
\item $w = \varepsilon$ and $q \in F$; or
\item $w = aw'$ some $a \in \Sigma$ and $w' \in \Sigma^*$, and there exists $(q,a,q') \in \delta$ such that $M$ accepts $w'$ from $q'$.
\end{itemize}
\item
The \emph{language}, $\lang(M)$, accepted by $M$ is defined as follows.
\[
\lang(M) = \{ w \in \Sigma^* \mid M \textnormal{ accepts } w \textnormal{ from } q_I \}
\]
\end{enumerate}
\end{defi}

Deterministic Finite Automata (DFAs) constitute a subclass of NFAs whose transition relation is deterministic, in a precisely defined sense.

\begin{defi}[Deterministic Finite Automaton (DFA)]
NFA $M = (Q, \Sigma, \delta, q_I, F)$ is a \emph{deterministic finite automaton} (DFA) if and only if $\delta$ satisfies the following:  for every $q \in Q$ and $a \in \Sigma$, there exists exactly one $q'$ such that $(q, a, q') \in \delta$.
\end{defi}

Since DFAs are NFAs the definition of $\lang$ in Definition~\ref{defi:nfa-language} is directly applicable to them as well.
NFAs with $\epsilon$-transitions are now defined as follows.

\begin{defi}[NFAs with $\varepsilon$-Transitions]
A \emph{nondeterministic automaton with $\varepsilon$-transitions} (NFA-$\varepsilon$) is a tuple $(Q, \Sigma, \delta, q_I, F)$, where:
\begin{itemize}
\item
$Q$ is a nonempty finite set of \emph{states};
\item
$\Sigma$ is an \emph{alphabet}, with $\varepsilon \not\in \Sigma$;
\item
$\delta \subseteq Q \times (\Sigma \cup \{\varepsilon\}) \times Q$ is the \emph{transition relation};
\item
$q_I \in Q$ is the \emph{initial state}; and
\item
 $F$ is the set of \emph{accepting}, or \emph{final}, states.
\end{itemize}
\end{defi}

An NFA-$\varepsilon$ is like a NFA except that some transitions can be labeled with the empty string $\varepsilon$ rather than a symbol from $\Sigma$.  The intution is that a transition of form $(q, \varepsilon, q')$ can occur without consuming any symbol as an input.  Formalizing this intuition, and defining $\lang(M)$ for NFA-$\varepsilon$, may be done as follows.

\begin{defi}[Language of a NFA-$\varepsilon$]
Let $M = (Q, \Sigma, \delta, q_I, F)$ be a NFA-$\varepsilon$.
\begin{enumerate}
\item
Let $q \in Q$ and $w \in \Sigma^*$.  Then $M$ \emph{accepts} $w$ \emph{from} $q$ if and only if one of the following holds.
\begin{itemize}
\item
$w = \varepsilon$ and $q' \in F$; or
\item
$w = aw'$ for some $a \in \Sigma$ and $w' \in \Sigma^*$ and there exists $q' \in Q$ such that $(q, a, q') \in \delta$ and $M$ accepts $w'$ from $q'$; or
\item
there exists $q' \in Q$ such that $(q, \varepsilon, q') \in \delta$ and $M$ accepts $w$ from $q'$.
\end{itemize}
\item
The \emph{language}, $\lang(M)$, accepted by $M$ is defined as follows.
\[
\lang(M) = \{ w \in \Sigma^* \mid M \textnormal{ accepts } w \textnormal{ from } q_I \}
\]
\end{enumerate}
\end{defi}

Defining the language of a NFA-$\varepsilon$ requires redefining the notion of a machine accepting a string from state $q$ as given in the definition of the language of a NFA.  This redefinition reflects the essential difference between $\varepsilon$-transitions and those labeled by alphabet symbols.

The three types of automata have differences in form, but equivalent expressive power.  It should first be noted that, just as every DFA is already a NFA, every NFA is also a NFA-$\varepsilon$, namely, a NFA-$\varepsilon$ with no $\varepsilon$-transitions.  Thus, every language accepted by some DFA is also accepted by some NFA, and every language accepted by some NFA is accepted by some NFA-$\varepsilon$.  The next theorem establishes the converses of these implications.

\begin{thm}[Equivalence of DFAs, NFAs and NFA-$\varepsilon$s]\label{thm:fa-equivalence}
\hfill
\begin{enumerate}
\item\label{case1}
Let $M$ be a NFA.  Then there is a DFA $D(M)$ such that $\lang(D(M)) = \lang(M)$.
\item\label{case2}
Let $M$ be a NFA-$\varepsilon$.  Then there is a NFA $N(M)$ such that $\lang(N(M)) = \lang(M)$.
\end{enumerate}
\end{thm}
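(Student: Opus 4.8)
The plan is to establish each part by exhibiting an explicit construction and then verifying its correctness by induction on word length. For part~(\ref{case1}) I would use the classical \emph{subset} (powerset) construction, and for part~(\ref{case2}) the \emph{$\varepsilon$-closure} construction; in each case the automaton itself is easy to write down, and the real work lies in proving a suitable acceptance invariant that connects the new machine to the original.

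For part~(\ref{case1}), given $M = (Q, \Sigma, \delta, q_I, F)$ I would define $D(M) = (2^Q, \Sigma, \delta_D, \{q_I\}, F_D)$, where $\delta_D(S,a) = \setof{q'}{(q,a,q') \in \delta \textnormal{ for some } q \in S}$ and $F_D = \setof{S \subseteq Q}{S \cap F \neq \emptyset}$. Since $\delta_D$ is a total function from $2^Q \times \Sigma$ into $2^Q$, the result is indeed a DFA. The key lemma is that for every $S \subseteq Q$ and $w \in \Sigma^*$, $D(M)$ accepts $w$ from $S$ if and only if $M$ accepts $w$ from some $q \in S$. I would prove this by induction on $|w|$: the base case reduces to the definition of $F_D$, and the inductive step reduces to the definition of $\delta_D$ combined with the induction hypothesis. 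Instantiating the lemma with $S = \{q_I\}$ then yields $\lang(D(M)) = \lang(M)$.

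For part~(\ref{case2}), given an NFA-$\varepsilon$ $M = (Q, \Sigma, \delta, q_I, F)$ I would first define, for each $q \in Q$, its \emph{$\varepsilon$-closure} $E(q)$ as the least set containing $q$ and closed under $\varepsilon$-transitions, i.e. if $p \in E(q)$ and $(p, \varepsilon, p') \in \delta$ then $p' \in E(q)$. I would then set $N(M) = (Q, \Sigma, \delta', q_I, F')$, where $(q,a,q') \in \delta'$ holds exactly when there exist $p \in E(q)$ and $p'$ with $(p,a,p') \in \delta$ and $q' \in E(p')$, and where $F' = \setof{q \in Q}{E(q) \cap F \neq \emptyset}$. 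Because $\delta'$ carries no $\varepsilon$-labels, $N(M)$ is a genuine NFA. The correctness lemma is that $N(M)$ accepts $w$ from $q$ if and only if $M$ accepts $w$ from $q$, again proved by induction on $|w|$; taking $q = q_I$ delivers the claim.

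The main obstacle lies in part~(\ref{case2}), specifically in reconciling the two rather different acceptance definitions. First, $E(q)$ must be shown to be well defined, which I would obtain as the least fixed point of a monotone operator on the finite lattice $2^Q$. Second, the acceptance relation for $M$ is itself recursive \emph{through} $\varepsilon$-transitions, so even the base case $w = \varepsilon$ is nontrivial: I must show that $M$ accepts $\varepsilon$ from $q$ precisely when $E(q) \cap F \neq \emptyset$, which requires an auxiliary induction establishing that $M$ accepts $\varepsilon$ from $q$ if and only if some state reachable from $q$ by $\varepsilon$-transitions lies in $F$. In the inductive step I would need the analogous fact that a leading symbol $a$ may be surrounded by arbitrary $\varepsilon$-moves, matching exactly the definition of $\delta'$. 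Marshalling these $\varepsilon$-reachability facts so that they line up cleanly with $E$, in both directions of the biconditional, is the delicate part of the argument.
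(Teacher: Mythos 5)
Your proposal is correct and follows exactly the approach the paper sketches: the subset construction for part~(\ref{case1}) and the $\varepsilon$-closure construction for part~(\ref{case2}); the paper only gives a two-sentence outline, and your version supplies the standard details (the acceptance invariants and the $\varepsilon$-reachability lemmas) that such an outline presupposes.
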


\begin{proof}
The proof of Case~(\ref{case1}) involves the well-known subset construction, whereby each subset of states in $M$ is associated with a single state in $D(M)$.  The proof of Case~(\ref{case2}) typically relies on defining the $\varepsilon$ closure of a set of states, namely, the set of states reachable from the given set via a sequence of zero or more $\varepsilon$-transitions.  This notion is used to define the transition relation of $N(M)$ as well as its set of accepting states.
\end{proof}

\section{Kleene's Theorem}

Given the definitions in the previous section it is now possible to state Kleene's Theorem succinctly.

\begin{thm}[Kleene's Theorem]
Let $\Sigma$ be an alphabet.  Then $L \subseteq \Sigma^*$ is regular if and only if there is a DFA $M$ such that $\lang(M) = L$.
\end{thm}

The proof of this theorem is usually split into two pieces.  The first involves showing that for any regular expression $r$, there is a finite automaton $M$ (DFA, NFA or NFA-$\varepsilon$) such that $\lang(M) = \lang(r)$.  Theorem~\ref{thm:fa-equivalence} then ensures that the resulting finite automaton, if it is not already a DFA, can be converted into one in a language-preserving manner.  The second shows how to convert a DFA $M$ into a regular expression $r$ in such a way that $\lang(r) = \lang(M)$; there are several algorithms for this in the literature, including the classic dynamic-programming-based method of Kleene~\cite{kleene1956} and equation-solving methods that rely on Arden's Lemma~\cite{arden1961}.

From a practical standpoint, the conversion of regular expressions to finite automata is the more important, since regular expressions are textual and are used consequently as the basis for string search and processing.  For this reason, I believe that teaching this construction is especially keyin automata-theory classes, and this where my complaint with the approaches in traditional automata-theory texts originates.

To understand the basis for my dissatisfaction, let us review the construction presented in~\cite{hopcroftMU2006}, which explains how to convert regular expression $r$ into NFA-$\varepsilon$ $M_r$ in such a way that $\lang(r) = \lang(M_r)$.  The method is based on the construction due to Ken Thompson~\cite{thompson1968} and produces NFA-$\varepsilon$ $M_r$ with the following properties.
\begin{itemize}
\item 
The initial state $q_I$ has no incoming transitions:  that is, there exists no $(q, \alpha, q_I) \in \delta$.
\item
There is a single accepting state $q_F$, and $q_F$ has no outgoing transitions:  that is, $F = \{ q_F \}$, and there exists no $(q_F, \alpha, q') \in \delta$.
\end{itemize}
The approach proceeds inductively on the structure of $r$.  For example, if $r = (r')^*$, then assume that $M_{r'} = (Q, \Sigma, \delta, q_I, \{ q_F \})$ meeting the above constraints has been constructed.  Then $M_r$ is built as follows.  First, let $q_{I}' \not\in Q$ and $q_{F}' \not\in Q$ be new states.  Then $M_r = (Q \cup \{q_I', q_F'\}, \Sigma, \delta', \{q_F' \})$, where
\[
\delta' = \delta \cup \{ (q_I', \varepsilon, q_I), (q_I', \varepsilon, q_F'), (q_F, \varepsilon, q_I), (q_F, \varepsilon, q_F') \}.
\]
It can be shown that $M_r$ satisfies the requisite properties and that $\lang(M_r) = (\lang(r'))^*$.

Mathematically, the construction of $M_r$ is wholly satisfactory:  it has the required properties and can be defined relatively easily, albeit at the cost of introducing new states and transitions.  The proof of correctness is perhaps somewhat complicated, owing to the definition of $\lang(M)$ and the subtlety of $\varepsilon$-transitions, but it does acquaint students with definitions via structural induction on regular expressions.

My concern with the construction, however, is several-fold.  On the one hand, it does require the introduction of the notion of NFA-$\varepsilon$, which is indeed more complex that that of NFA.  In particular, the definition of acceptance requires allowing transitions that consume no symbol in the input word.  On the other hand, the accretion of the introduction of new states at each state in the construction makes it difficult to test students on their understanding of the construction in an exam setting.  Specifically, even for relatively small regular expressions the literal application of the construction yields automata with too many states and transitions to be doable during the typical one-hour midterm exam for which US students would be tested on the material.  Finally, the construction bears no resemblance to algorithms used in practice for construction finite automata from regular expressions.  In particular routines such as the Berry-Sethi procedure~\cite{berry1986regular} construct DFAs directly from regular expressions, completely avoiding the need for NFA-$\varepsilon$s, or indeed NFAs, altogether.

The Berry-Sethi procedure is subtle and elegant, and relies on concepts, such as Brzozowski derivatives~\cite{brzozowski1964derivatives}, that I would view as too specialized for an undergraduate course on automata theory.  Consequently, I would not be in favor of covering them in an undergraduate classroom setting.  Instead, in the next section I give a technique, based on operational semantics in process algebra, for construction NFAs from regular expressions.  The resulting NFAs are small enough for students to construct during exams, and the construction has other properties, including the capacity for introducing other operations that preserve regularity, that are pedagogically useful.

\section{NFAs via Structural Operational Semantics}


This section describes an approach based on \emph{Structural Operational Semantics} (SOS)~\cite{plotkin1981,plotkin2004} for constructing NFAs from regular expressions.  Specifically, I will define a (small-step) operational semantics for regular expressions on the basis of the structure of regular expressions, and use the semantics to construct the requisite NFAs.  The construction requires no $\varepsilon$-transitions and yields automata with at most one more state state than the size of the regular expression from which they are derived.

Following the conventions in the other parts of this paper I give the SOS rules using notation typically found in automata-theory texts.  In particular, the SOS specification is given in natural language, as a collection of if-then statements, and not via inference rules.  I use this approach in the classroom to avoid having to introduce notations for inference rules.  In the appendix I give the more traditional SOS presentation.

\subsection{An Operational Semantics for Regular Expressions}

In what follows fix alphabet $\Sigma$.
The basis for the operational semantics of regular expressions consists of a relation, $\derives{} \subseteq \rexp(\Sigma) \times \Sigma \times \rexp(\Sigma)$, and a predicate $\surd \subseteq \rexp(\Sigma)$.  In what follows I will write $r \derives{a} r'$ and $r \surd$ in lieu of $(r, a, r') \in \,\derives{}$ and $r \in \surd$.  The intuitions are as follows.
\begin{enumerate}
\item
$r \surd$ is intended to hold if and only if $\varepsilon \in \lang(r)$.  This is used in defining accepting states.
\item
$r \derives{a} r'$ is intended to reflect the following about $\lang(r)$:  one way to build a word in $\lang(r)$ is to start with $a \in \Sigma$ and then finish it with a word from $\lang(r')$.

\end{enumerate}

Using these relations, I then show how to build a NFA from $r$ whose states are regular expressions, whose transitions are given by $\derives{}$, and whose final states are defined using $\surd$.

\paragraph{\textbf{\textit{Defining \texorpdfstring{$\surd$}{Acceptance predicate for regular expressions} and \texorpdfstring{$\derives{}$}{Transition relation for regular expressions}}}}

We now define $\surd$.
\begin{defi}[Definition of $\surd$]\label{def:surd}
Predicate $r \surd$ is defined inductively on the structure of $r \in \rexp(\Sigma)$ as follows.
\begin{itemize}
\item If $r = \varepsilon$ then $r \surd$.
\item If $r = (r')^*$ for some $r' \in \rexp(\Sigma)$ then $r \surd$.
\item If $r = r_1 + r_2$ for some $r_1, r_2 \in \rexp(\Sigma)$, and $r_1 \surd$, then $r \surd$.
\item If $r = r_1 + r_2$ for some $r_1, r_2 \in \rexp(\Sigma)$, and $r_2 \surd$, then $r \surd$.
\item If $r = r_1 \cdot r_2$ for some $r_1, r_2 \in \rexp(\Sigma)$, and $r_1 \surd$ and $r_2 \surd$, then $r \surd$.
\end{itemize}
\end{defi}

From the definition, one can see it is not the case that $\emptyset\surd$ or $a \surd$, for any $a \in \Sigma$, while both $\varepsilon \surd$ and $r^* \surd$ always.  This accords with the definition of $\lang(r)$; $\varepsilon \not\in \lang(\emptyset) = \emptyset $, and $\varepsilon \not\in \lang(a) = \{ a \}$, while $\varepsilon \in \lang(\varepsilon) = \{ \varepsilon \}$ and $\varepsilon \in L^*$ for any language $L \subseteq \Sigma^*$, and in particular for $L = \lang(r)$ for regular expression $r$.  The other cases in the definition reflect the fact that $\varepsilon \in \lang(r_1 + r_2)$ can only hold if $\varepsilon \in \lang(r_1)$ or $\varepsilon \in \lang(r_2)$, since $+$ is interpreted as set union, and that $\varepsilon \in \lang(r_1 \cdot r_2)$ can only be true if $\varepsilon \in \lang(r_1)$ and $\varepsilon \in \lang(r_2)$, since regular-expression operator $\cdot$ is interpreted as language concatenation.  We have the following examples.
\[
\begin{array}{lp{3in}}
(\varepsilon \cdot a^*)\surd & since $\varepsilon \surd$ and $a^* \surd$.\\
\neg(a + b)\surd  & since neither $a \surd$ nor $b \surd$.\\
(01 + (1+01)^*)\surd  & since $(1+01)^* \surd$.\\
\neg(01(1+01)^*)\surd  & since $\neg(01)\surd$.\\
\end{array}
\]

\noindent
We also use structural induction to define $\derives{}$.
\begin{defi}[Definition of $\derives{}$]\label{def:derives}
Relation $r \derives{a} r'$, where $r, r' \in \rexp(\Sigma)$ and $a \in \Sigma$, is defined inductively on $r$.
\begin{itemize}
\item If $r = a$ and $a \in \Sigma$ then $r \derives{a} \varepsilon$.
\item If $r = r_1 + r_2$ and $r_1 \derives{a} r_1'$ then $r \derives{a} r_1'$.
\item If $r = r_1 + r_2$ and $r_2 \derives{a} r_2'$ then $r \derives{a} r_2'$.
\item If $r = r_1 \cdot r_2$ and $r_1 \derives{a} r_1'$ then $r \derives{a} r_1' \cdot r_2$.
\item If $r = r_1 \cdot r_2$, $r_1 \surd$ and $r_2 \derives{a} r_2'$ then $r \derives{a} r_2'$.
\item If $r = (r')^*$ and $r' \derives{a} r''$ then $r \derives{a} r'' \cdot (r')^*$.
\end{itemize}
\end{defi}
The definition of this relation is somewhat complex, but the idea that it is
trying to capture is relatively simple: $r \derives{a} r'$ if one can build
words in $\lang(r)$ by taking the $a$ labeling $\derives{}$ and appending a
word from $\lang(r')$.  So we have the rule $a \derives{a} \varepsilon$ for $a \in
\Sigma$, while the rules for $+$ follow from the fact that $\lang(r_1 + r_2) =
\lang(r_1) \cup \lang(r_2)$.  The cases for $r_1 \cdot r_2$ in essence state that $aw \in
\lang(r_1 \cdot r_2)$ can hold either if there is a way of splitting $w$ into $w_1$ and $w_2$
such that $aw_1$ is in the language of $r_1$ and $w_2$ is in the language of $r_2$,
or if $\varepsilon$ is in the language of $r_1$ and $aw$ is in the language of $r_2$.
Finally, the rule for $(r')^*$ essentially permits ``looping''.  As examples, we have the following.
\[
\begin{array}{lp{8cm}}
a+b \derives{a} \varepsilon & by the rules for $a$ and $+$.\\
(abb + a)^* \derives{a} \varepsilon bb(abb+a)^*
    & by the rules for $a$, $\cdot$, $+$, and $^*$.
\end{array}
\]

\noindent
In this latter example, note that applying the definition literally requires the inclusion of the $\varepsilon$ in $\varepsilon bb(abb+a)^*$.  This is because the case for
$a$ says that $a \derives{a} \varepsilon$, meaning that $abb \derives{a} \varepsilon
bb$, etc.  However, when there are leading instances of $\varepsilon$ like this, I will sometimes leave them
out, and write $abb \derives{a} bb$ rather than $abb \derives{a}
\varepsilon bb$.\footnote{This convention can be formalized by introducing a special case in the definition of $\derives{}$ for $a \cdot r_2$ and distinguishing the current two cases for $r_1 \cdot r_2$ to apply only when $r_1 \not\in \Sigma.$}

The following lemmas about $\surd$ and $\derives{}$ formally establish the
intuitive properties that they should have.

\begin{lem}\label{lem:surd-lemma}
Let $r \in \rexp(\Sigma)$ be a regular expression.  Then $r \surd$ if and only if $\varepsilon \in
\lang(r)$.
\end{lem}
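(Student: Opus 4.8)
The plan is to proceed by structural induction on $r$, mirroring the inductive definitions of both $\surd$ (Definition~\ref{def:surd}) and $\lang$ (Definition~\ref{def:rexp-language}). Since $\surd$ is defined by cases on the outermost constructor of $r$, and $\lang(r)$ is defined the same way, each case of the induction reduces to checking that the assertions ``$r \surd$ holds'' and ``$\varepsilon \in \lang(r)$'' coincide, invoking the induction hypothesis on the immediate subexpressions where needed.

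For the base cases I would argue directly. When $r = \varepsilon$ both sides hold, since $\varepsilon \surd$ by the first clause of Definition~\ref{def:surd} and $\lang(\varepsilon) = \{\varepsilon\}$. When $r = \emptyset$ or $r = a$ for some $a \in \Sigma$, no clause of Definition~\ref{def:surd} establishes $r \surd$, and correspondingly $\varepsilon \notin \lang(\emptyset) = \emptyset$ and $\varepsilon \notin \lang(a) = \{a\}$, so both sides fail.

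For the inductive cases I assume the equivalence for all proper subexpressions. When $r = r_1 + r_2$, the definition gives $r \surd$ iff $r_1 \surd$ or $r_2 \surd$; applying the induction hypothesis to each disjunct and using $\lang(r_1 + r_2) = \lang(r_1) \cup \lang(r_2)$ recovers exactly the condition $\varepsilon \in \lang(r)$. When $r = (r')^*$, the definition makes $r \surd$ hold unconditionally, and $\varepsilon \in (\lang(r'))^* = \lang(r)$ holds unconditionally by the first clause of the definition of Kleene closure, so both sides are true. The concatenation case $r = r_1 \cdot r_2$ is the one requiring the most care: here $r \surd$ iff $r_1 \surd$ and $r_2 \surd$, so by the induction hypothesis I must show that $\varepsilon \in \lang(r_1)$ together with $\varepsilon \in \lang(r_2)$ is equivalent to $\varepsilon \in \lang(r_1) \cdot \lang(r_2)$.

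The crux, and the only point that is not a purely mechanical unfolding of definitions, is the elementary observation underlying the concatenation case: the only way to write $\varepsilon = w_1 \cdot w_2$ is with $w_1 = w_2 = \varepsilon$, since concatenating two words yields a word whose length is the sum of their lengths. With this remark in hand, $\varepsilon \in \lang(r_1) \cdot \lang(r_2)$ holds precisely when $\varepsilon$ belongs to both factor languages, and the equivalence is immediate. No other subtlety arises, so I expect the proof to be short once this splitting fact is stated.
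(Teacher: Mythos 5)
Your proof is correct and follows essentially the same route as the paper: structural induction on $r$, with the concatenation case reducing to the observation that $\varepsilon \in \lang(r_1)\cdot\lang(r_2)$ iff $\varepsilon$ lies in both factors. You simply spell out the remaining cases (which the paper leaves to the reader) and make explicit the splitting fact the paper cites as ``property of concatenation.''
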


\begin{proof}
The proof proceeds by structural induction on $r$.  Most cases are left to the reader; we only consider the $r = r_1 \cdot r_2$ case here.  The induction hypothesis states that $r_1 \surd$ if and only if $\varepsilon \in \lang(r_1)$ and $r_2 \surd$ if and only if $\varepsilon \in \lang(r_2)$.  One reasons as follows.
\[
\begin{array}{r@{\textnormal{ iff }}lp{6cm}}
r \surd
	& r_1 \surd \textnormal{ and } r_2 \surd
	& Definition of $\surd$
\\
	& \varepsilon \in \lang(r_1) \textnormal{ and } \varepsilon \in \lang(r_2)
	& Induction hypothesis
\\
	& \varepsilon \in (\lang(r_1)) \cdot (\lang(r_2))
	& Property of concatenation
\\
	& \varepsilon \in \lang(r_1 \cdot r_2)
	& Definition of $\lang(r_1 \cdot r_2)$
\\
	& \varepsilon \in \lang(r)
	& $r = r_1 \cdot r_2$
\end{array}
\]\qedhere
\end{proof}

\begin{lem}\label{lem:derives-lemma}
Let $r \in \rexp(\Sigma)$, $a
\in \Sigma$, and $w \in \Sigma^*$.  Then $aw \in \lang(r)$ if and only
if there is an $r' \in \rexp(\Sigma)$ such that $r \derives{a} r'$ and $w \in
\lang(r')$.
\end{lem}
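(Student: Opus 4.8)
The plan is to prove both directions of the biconditional simultaneously by structural induction on $r$, exploiting the fact that both $\lang$ and $\derives{}$ are defined by recursion on the structure of $r$. For each form of $r$ the induction hypothesis supplies the equivalence for the immediate subexpressions, and I will match up the clauses defining $r \derives{a} r'$ against the clauses defining $\lang(r)$.

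The base cases are immediate. When $r = \emptyset$ or $r = \varepsilon$ the word $aw$ is nonempty and hence not in $\lang(r)$, while $r$ has no outgoing transitions, so both sides of the biconditional are false. When $r = b$ for some $b \in \Sigma$, we have $aw \in \lang(b)$ exactly when $a = b$ and $w = \varepsilon$, and $b \derives{a} r'$ holds exactly when $a = b$ and $r' = \varepsilon$, with $w \in \lang(\varepsilon)$ forcing $w = \varepsilon$; the two conditions coincide. The case $r = r_1 + r_2$ is equally routine: since $\lang(r) = \lang(r_1) \cup \lang(r_2)$ and the transitions of $r$ are exactly those of $r_1$ together with those of $r_2$ (keeping the same target), the equivalence follows by applying the induction hypothesis to each disjunct.

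The concatenation case $r = r_1 \cdot r_2$ is the first genuinely delicate one. Here $\lang(r) = \lang(r_1) \cdot \lang(r_2)$, so $aw \in \lang(r)$ means $aw = uv$ with $u \in \lang(r_1)$ and $v \in \lang(r_2)$, and I split on whether $u$ is empty. If $u = \varepsilon$ then $\varepsilon \in \lang(r_1)$, which by Lemma~\ref{lem:surd-lemma} is exactly $r_1 \surd$, and $aw = v \in \lang(r_2)$, so the induction hypothesis for $r_2$ together with the rule guarded by $r_1 \surd$ yields a transition $r \derives{a} r_2'$ with $w \in \lang(r_2')$. If $u \neq \varepsilon$ then $u = a u'$, so $au' \in \lang(r_1)$ and $w = u'v$; the induction hypothesis for $r_1$ gives $r_1 \derives{a} r_1'$ with $u' \in \lang(r_1')$, and the first concatenation rule produces $r \derives{a} r_1' \cdot r_2$ with $w = u'v \in \lang(r_1' \cdot r_2)$. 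The converse direction simply reverses these two computations, reading each concatenation rule backwards and invoking Lemma~\ref{lem:surd-lemma} and the induction hypothesis in the other direction.

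The Kleene-closure case $r = (r')^*$ is the main obstacle. The only transitions out of $(r')^*$ have the form $(r')^* \derives{a} r'' \cdot (r')^*$ with $r' \derives{a} r''$, so the reverse direction is clean: from $w \in \lang(r'' \cdot (r')^*) = \lang(r'') \cdot (\lang(r'))^*$, write $w = w_1 w_2$, use the induction hypothesis to get $a w_1 \in \lang(r')$, and conclude $aw = (aw_1) w_2 \in (\lang(r'))^*$. The forward direction requires care because a membership witness for $aw \in (\lang(r'))^*$ factors $aw$ as a concatenation of words in $\lang(r')$, some of which may be $\varepsilon$, so I cannot simply peel off the first factor. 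The fix is to exploit the finiteness of the derivation witnessing $aw \in (\lang(r'))^*$: since $aw \neq \varepsilon$, at least one factor is nonempty, and discarding the leading empty factors (which contribute nothing) leaves a first nonempty factor $x$ with $aw = x \cdot v$, where $x \in \lang(r')$ and $v \in (\lang(r'))^*$. Because $x$ is nonempty and begins $aw$, we have $x = a u'$, and the induction hypothesis for $r'$ then supplies $r' \derives{a} r''$ with $u' \in \lang(r'')$, giving the required transition $(r')^* \derives{a} r'' \cdot (r')^*$ with $w = u' v \in \lang(r'' \cdot (r')^*)$. Handling this leading-$\varepsilon$ subtlety cleanly, ideally via an auxiliary observation that any nonempty word in $L^*$ can be written as $x v$ with $x \in L$ nonempty and $v \in L^*$, is where I expect to spend the most effort.
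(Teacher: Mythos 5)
Your proof is correct and follows essentially the same route as the paper's: structural induction on $r$, matching the clauses of $\derives{}$ against those of $\lang$, with Lemma~\ref{lem:surd-lemma} supplying the $r_1\surd$ guard in the concatenation case and the single star rule handling $(r')^*$. If anything you are more careful than the paper's own argument, which dispatches the leading-empty-factor issue in the Kleene-closure case with a bare ``property of Kleene closure,'' whereas you make the needed observation (every nonempty word of $L^*$ factors as $xv$ with $x \in L$ nonempty and $v \in L^*$) explicit.
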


\begin{proof}
The proof proceeds by structural induction on $r$.  We only consider the case $r = (r')^*$ in detail; the others are left to the reader.  The induction hypothesis asserts that for all $a$ and $w'$, $aw' \in \lang(r')$ if and only if there is an $r''$ such that $r' \derives{a} r''$ and $w' \in \lang(r'')$.  We reason as follows.
\[
\begin{array}{r@{\textnormal{ iff }}lp{4.8cm}}
aw \in \lang(r)
	& aw \in \lang((r')^*)
	& $r = (r')^*$
\\
	& aw \in (\lang(r'))^*
	& Definition of $\lang((r')^*)$
\\
	& aw = w_1 \cdot w_2 \textnormal{ some } w_1 \in \lang(r'), w_2 \in (\lang(r'))^*
	& Definition of Kleene closure
\\
	& w_1 = a \cdot w_1' \textnormal{ some } w_1'
	& Property of Kleene closure
\\
	& r' \derives{a} r'' \textnormal{ some } r'' \textnormal{ with } w_1' \in \lang(r'')
	& Induction hypothesis
\\
	& r \derives{a} r'' \cdot (r')^*
	& Definition of $\derives{}$
\\
	& w_1' \cdot w_2 \in \lang(r'') \cdot \lang((r')^*)
	& Definition of concatenation
\\
	& w_1' \cdot w_2 \in \lang(r'' \cdot (r')^*)
	& Definition of $\lang(r'' \cdot (r')^*)$
\\
	& r \derives{a} r'' \cdot (r')^* \textnormal{ and } w \in \lang(r'' \cdot (r')^*)
	& $w = w_1' \cdot w_2$
\end{array}
\]
\end{proof}

Appendix~\ref{app} contains definitions of $\surd$ and $\derives{}$ in the more usual inference-rule style used in SOS specifications.

\subsection{Building Automata using \texorpdfstring{$\surd$}{Acceptance} and \texorpdfstring{$\derives{}$}{Transitions}}

\noindent
That $\surd$ and $\derives{}$ may be used to build NFAs derives from how
they may be used to determine
whether a string is in the language of a regular expression.  Consider the
following sequence of transitions starting from the regular expression
$(abb + a)^*$.
\[
(abb + a)^* \derives{a} bb(abb+a)^* \derives{b} b(abb+a)^* \derives{b}
(abb+a)^* \derives{a} (abb+a)^*
\]

Using Lemma~\ref{lem:derives-lemma} four times, we can conclude that if $w \in
\lang((abb+a)^*)$, then $abba \cdot w \in \lang((abb+a)^*)$ also.  In addition,
since $(abb+a)^* \surd$, it follows from Lemma~\ref{lem:surd-lemma} that $\varepsilon \in
\lang((abb+a)^*)$.  Since $abba \cdot \varepsilon = abba$, it follows that
$abba \in \lang((abb+a)^*)$.

More generally, if there is a sequence of
transitions $r_0 \derives{a_1} r_1 \cdots \derives{a_n} r_n$ and $r_n \surd$,
then it follows that $a_1 \ldots a_n \in \lang(r_0)$, and vice versa.
This observation suggests the following strategy for building a NFA
from a regular expression $r$.
\begin{enumerate}
\item\label{states} Let the states be all possible regular expressions that can
    be reached by
    some sequence of transitions from $r$.
\item Take $r$ to be the start state.
\item Let the transitions be given by $\derives{}$.
\item Let the accepting states be those regular expressions $r'$ reachable from $r$ for which
    $r'\surd$ holds.
\end{enumerate}
Of course, this construction is only valid if the set of all possible regular
expressions mentioned in Step~(\ref{states}) is finite, since NFAs are required
to have a finite number of states.  In fact, a stronger result can be proved.  First, recall the definition of the size, $|r|$, of regular expression $r$.

\begin{defi}[Size of a regular expression]
The size, $|r|$, of $r \in \rexp(\Sigma)$ is defined inductively as follows.
\[
|r| =
\left\{
\begin{array}{lp{8cm}}
1
	& if $r = \varepsilon, r = \emptyset,$ or $r = a$ for some $a \in \Sigma$
\\
|r'| + 1
	& if $r = (r')^*$
\\
|r_1| + |r_2| + 1
	& if $r = r_1 + r_2$ or $r = r_1 \cdot r_2$
\end{array}
\right.
\]
\end{defi}

\noindent
Intuitively, $|r|$ counts the number of regular-expression operators in $r$.  The \emph{reachability set} of regular expression $r$ can now be defined in the usual manner.

\begin{defi}
Let $r \in \rexp(\Sigma)$ be a regular expression.  Then the set $RS(r) \subseteq
\rexp(\Sigma)$ of regular expressions \emph{reachable from} $r$  is defined recursively as follows.
\begin{itemize}
\item $r \in RS(r)$.
\item If $r_1 \in RS(r)$ and $r_1 \derives{a} r_2$ for some $a \in \Sigma$,
    then $r_2 \in RS(r)$.
\end{itemize}
\end{defi}

\noindent
As an example, note that $|(abb + a)^*| = 8$ and that
\[
RS((abb +
a)^*) = \{ (abb + a)^*, \varepsilon bb(abb+a)^*, \varepsilon b(abb+a)^*, \varepsilon(abb + a)^* \},
\]
(In this case I have not applied my heuristic of suppressing leading $\varepsilon$ expressions.)
The following can now be provd.
\begin{thm}\label{thm:num-states}
Let $r \in \rexp(\Sigma)$ be a regular expression.  Then $|RS(r)| \leq |r|+1$.
\end{thm}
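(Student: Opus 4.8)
\section*{Proof proposal}

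The plan is to prove the statement by structural induction on $r$, but the naive induction hypothesis $|RS(r)| \le |r| + 1$ turns out to be too weak to close the $r = r_1 + r_2$ case, where the reachable sets of $r_1$ and $r_2$ are merged under a single new start state and the two ``$+1$''s would accumulate into an off-by-one error. To get around this I would instead induct on a slightly stronger quantity. Write $RS^{+}(r)$ for the set of regular expressions reachable from $r$ by \emph{at least one} transition, so that $RS(r) = \{r\} \cup RS^{+}(r)$ and hence $|RS(r)| \le |RS^{+}(r)| + 1$. The key lemma I would establish is $|RS^{+}(r)| \le |r|$, from which the theorem follows immediately.

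This lemma I would prove by structural induction on $r$, using in each case a characterization of $RS^{+}(r)$ in terms of the reachable sets of the immediate subexpressions. For the base cases, $RS^{+}(\emptyset) = RS^{+}(\varepsilon) = \emptyset$ and $RS^{+}(a) = \{\varepsilon\}$, each of size at most $|r|$. For $r = r_1 + r_2$, the immediate successors of $r$ are exactly those of $r_1$ together with those of $r_2$, whence $RS^{+}(r_1 + r_2) = RS^{+}(r_1) \cup RS^{+}(r_2)$ and so $|RS^{+}(r)| \le |RS^{+}(r_1)| + |RS^{+}(r_2)| \le |r_1| + |r_2| < |r_1 + r_2|$ by the induction hypothesis. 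For $r = r_1 \cdot r_2$, I would show that every state reachable in at least one step is either of the form $s \cdot r_2$ with $s \in RS^{+}(r_1)$ (while the computation is still ``inside'' the first factor) or lies in $RS^{+}(r_2)$ (once the first factor has been passed using a $\surd$ transition); since $s \mapsto s \cdot r_2$ is injective this yields $|RS^{+}(r_1 \cdot r_2)| \le |RS^{+}(r_1)| + |RS^{+}(r_2)| \le |r_1| + |r_2| < |r_1 \cdot r_2|$. For $r = (r')^{*}$, an analogous argument shows $RS^{+}((r')^{*}) = \{\, s \cdot (r')^{*} \mid s \in RS^{+}(r') \,\}$, so $|RS^{+}((r')^{*})| = |RS^{+}(r')| \le |r'| < |(r')^{*}|$.

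The technical heart, and the step I expect to be the main obstacle, is establishing these structural characterizations of the reachable sets for $\cdot$ and $^{*}$ --- in particular the invariant that, until the first factor is exited via a $\surd$ transition, every reachable expression retains the trailing suffix $\cdot r_2$ (respectively $\cdot (r')^{*}$). I would prove this by a secondary induction on the length of the transition sequence reaching a given state, checking against each clause of Definition~\ref{def:derives} that a single transition preserves the claimed form. The delicate points are the bookkeeping of the two ways a product can evolve and confirming that the suffix-adding map is injective, so that counting reachable left factors reduces to counting elements of $RS^{+}(r_1)$ (resp.\ $RS^{+}(r')$). Once these are in hand, the size bounds above each carry an extra unit of slack, which is exactly what the unstrengthened hypothesis lacked in the $+$ case.
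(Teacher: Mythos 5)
Your proposal is correct, and its skeleton is the same as the paper's: structural induction on $r$, with each case resting on a characterization of the reachable set in terms of the reachable sets of the immediate subexpressions. The genuine difference is your strengthened induction hypothesis $|RS^{+}(r)| \le |r|$ for the set of expressions reachable in \emph{at least one} step, and this is more than a stylistic choice --- it repairs a real defect in the paper's own argument. The paper handles $r = r_1 + r_2$ by asserting $RS(r) \subseteq RS(r_1) \cup RS(r_2)$ and hence $|RS(r)| \le |RS(r_1)| + |RS(r_2)|$; the inclusion is not literally true, since $r_1 + r_2 \in RS(r_1+r_2)$ but inspection of Definition~\ref{def:derives} shows that every transition target is either $\varepsilon$ or a concatenation, so $r_1+r_2$ lies in neither $RS(r_1)$ nor $RS(r_2)$. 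The correct decomposition is $RS(r_1+r_2) = \{r_1+r_2\} \cup RS^{+}(r_1) \cup RS^{+}(r_2)$, and with only the unstrengthened hypothesis this gives $|r|+2$ unless one separately proves $r_i \notin RS^{+}(r_i)$ --- which is false in general: $\varepsilon \cdot a^{*} \derives{a} \varepsilon \cdot a^{*}$, and indeed for $r = (\varepsilon \cdot a^{*}) + (\varepsilon \cdot b^{*})$ one has $|RS(r)| = 3 > 2 = |RS(r_1)| + |RS(r_2)|$, so the paper's intermediate inequality fails outright (the final bound survives only because there is slack elsewhere). Your reformulation makes the arithmetic close cleanly in every case without any such side condition. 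Two minor remarks: in the $^{*}$ case you only need the inclusion $RS^{+}((r')^{*}) \subseteq \setof{s \cdot (r')^{*}}{s \in RS^{+}(r')}$, not equality; and the ``suffix invariant'' you identify as the technical heart is exactly the secondary induction on transition sequences that the paper hides behind the phrase ``it can be shown that'' in its $\cdot$ and $^{*}$ cases, so you are right to flag it as the place where the real work lives.
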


\begin{proof}
The proof proceeds by structural induction on $r$.  There are six cases to consider.
\begin{description}
\item[$r = \emptyset$]  In this case $RS(r) = \{ \emptyset \}$, and $|RS(r)| = 1 = |r| < |r|+1$.
\item[$r = \varepsilon$]  In this case $RS(r) = \{ \varepsilon \}$, and $|RS(r)| = 1 = |r| < |r|+1$.
\item[$r = a$ for some $a \in \Sigma$]
	In this case $RS(r) = \{a, \varepsilon \}$, and $|RS(r)| = 2 = |r|+1$.
\item[$r = r_1 + r_2$]
	In this case, $RS(r) \subseteq RS(r_1) \cup RS(r_2)$, and the induction hypothesis guarantees that $|RS(r_1)| \leq |r_1| + 1$ and $RS(r_2) \leq |r_2| + 1$.  It then follows that
	\[
	|RS(r)| \leq |RS(r_1)| + |RS(r_2)| \leq |r_1| + |r_2| + 2 = |r| + 1.
	\]
\item[$r = r_1 \cdot r_2$]
	In this case it can be shown that $RS(r) \subseteq \setof{r_1' \cdot r_2}{r_1' \in RS(r_1)} \cup RS(r_2)$.  Since 
	$|\setof{r_1' \cdot r_2}{r_1' \in RS(r_1)}| = |RS(r_1)|$, similar reasoning as in the $+$ case applies.
\item[$r = (r')^*$]
	In this case we have that $RS(r) \subseteq \{ r \} \cup \setof{r'';r}{r'' \in RS(r')}$.  Thus
	\[
	|RS(r)| \leq |RS(r')| + 1 \leq |r'|+2 = |r| + 1.
	\]
\end{description}
\end{proof}

This result shows not only that the sketched NFA construction given above yields a finite number of states for given $r$, it in fact establishes that this set of state is no larger than $|r| + 1$.  This highlights one of the main reasons I opted to introduce this construction in my classes:  small regular expressions yield NFAs that are almost as small, and can be constructed manually in an exam setting.

We can now formally define the construction of NFA $M_r$ from regular expression $r$ as follows.
\begin{defi}\label{def:Mr}
Let $r \in \rexp(\Sigma)$ be a regular expression.  Then $M_r = (Q,
\Sigma, q_I, \delta, A)$ is the NFA defined as follows.
\begin{itemize}
\item $Q = RS(r)$.
\item $q_I = r$.
\item $\delta = \setof{(r_1, a, r_2)}{r_1 \derives{a} r_2}$.
\item $F = \setof{r' \in Q}{r' \surd}$.
\end{itemize}
\end{defi}

The next theorem establishes that $r$ and $M_r$
define the same languages.
\begin{thm}
Let $r \in \rexp(\Sigma)$ be a regular expression.  The $\lang(r) = \lang(M_r)$.
\end{thm}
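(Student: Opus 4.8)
The plan is to reduce the statement to the two lemmas already established about $\surd$ and $\derives{}$, by showing that acceptance in $M_r$ coincides exactly with membership in the language of the regular expression labelling the current state. The essential move is to prove a stronger, generalized claim: for \emph{every} state $r' \in RS(r)$ and every word $w \in \Sigma^*$, $M_r$ accepts $w$ from $r'$ if and only if $w \in \lang(r')$. The theorem then follows immediately by instantiating $r' = r = q_I$ and appealing to the definition of $\lang(M_r)$ in Definition~\ref{defi:nfa-language}.

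I would prove this generalized claim by induction on the length of $w$. For the base case $w = \varepsilon$, the definition of acceptance says $M_r$ accepts $\varepsilon$ from $r'$ exactly when $r' \in F$; by the definition of $F$ in Definition~\ref{def:Mr} this is equivalent to $r' \surd$, which by Lemma~\ref{lem:surd-lemma} is equivalent to $\varepsilon \in \lang(r')$, as required. For the inductive step, write $w = aw'$ with $a \in \Sigma$. By the definition of acceptance, $M_r$ accepts $aw'$ from $r'$ iff there is a transition $(r', a, r'') \in \delta$ with $M_r$ accepting $w'$ from $r''$. Since $r' \in RS(r)$ and $RS(r)$ is closed under $\derives{}$, any such $r''$ again lies in $RS(r)$, so the induction hypothesis applies to the strictly shorter word $w'$ and yields that $M_r$ accepts $w'$ from $r''$ iff $w' \in \lang(r'')$. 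Using the definition of $\delta$, which identifies transitions with $\derives{}$, acceptance of $aw'$ from $r'$ is therefore equivalent to the existence of an $r''$ with $r' \derives{a} r''$ and $w' \in \lang(r'')$. By Lemma~\ref{lem:derives-lemma} this is exactly the condition $aw' \in \lang(r')$, completing the step.

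The proof is largely bookkeeping once the generalized claim is in place; the one point that genuinely requires care is the strengthening itself. One cannot argue directly about the start state alone, because the inductive step hands control to a successor state, so the statement must be quantified over all reachable regular expressions from the outset. Tied to this is the observation that $RS(r)$ is closed under $\derives{}$, which guarantees that every successor $r''$ is genuinely a state of $M_r$ and hence a legitimate target of the induction hypothesis; without this closure the generalized claim would not even be well-typed against the definition of acceptance. I expect this to be the main obstacle, in the sense that the correct formulation of the inductive invariant is the crux, whereas the two lemmas then discharge the base and inductive cases almost mechanically.
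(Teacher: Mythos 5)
Your proof is correct and follows essentially the same route as the paper: the paper's (much terser) argument likewise reduces the theorem to Lemmas~\ref{lem:surd-lemma} and~\ref{lem:derives-lemma} by observing that $w = a_1\ldots a_n \in \lang(r)$ iff there is a $\derives{}$-path from $r$ to some $r'$ with $r'\surd$, which is precisely what your induction on $|w|$ over all states of $RS(r)$ establishes. Your generalized invariant and the remark that $RS(r)$ is closed under $\derives{}$ are exactly the details the paper leaves implicit.
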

\begin{proof}
Relies on the fact that Lemmas~\ref{lem:surd-lemma} and~\ref{lem:derives-lemma}
guarantee that $w = a_1\ldots a_n \in \lang(r)$ if and only if there is a
regular expression $r'$ such that $r \derives{a_1} \cdots \derives{a_n} r'$ and
$r'\surd$.
\end{proof}

\subsection{Computing \texorpdfstring{$M_r$}{NFAs from Regular Expressions}}

This section gives a routine for computing $M_r$.  It intertwines the computation of the reachability set from regular expression $r$ with the updating of the transition relation and set of accepting states.  It relies on the computation of the so-called \emph{outgoing transitions} of $r$; these are defined as follows.

\begin{defi}
Let $r \in \rexp(\Sigma)$ be a regular expression.  Then the set of
\emph{outgoing transitions} from $r$ is defined as the set
$\setof{(a,r')}{r \derives{a} r'}$.
\end{defi}
The outgoing transitions from $r$ consists of pairs
$(a,r')$ that, when combined with $r$, constitute a valid
transition $r \derives{a} r'$.
Figure~\ref{fig:out-def} defines a recursive function, $\out$, for computing the outgoing transitions of $r$.
The routine uses the structure of $r$ and the definition of $\derives{}$ to guide its computation.  For regular
expressions of the form $\emptyset, \varepsilon$ and $a \in \Sigma$, the
definition of $\derives{}$ in Definition~\ref{def:derives} immediately gives all the transitions.  For
regular expressions built using $+, \cdot$ and $^*$, one must first
recursively compute the outgoing transitions of the subexpressions of $r$ and
then combine the results appropriately, based on the cases given in the
Definition~\ref{def:derives}.

\begin{figure}
$$
\out(r) =
\left\{
\begin{array}{lp{10em}}
\emptyset
    & if $r = \emptyset$ or $r = \varepsilon$\\
\{(a, \varepsilon)\}
    & if $r = a \in \Sigma$\\
\out(r_1) \cup  \out(r_2)
    & if $r = r_1 + r_2$\\
\setof{(a, r_1' \cdot r_2)}{(a,r_1') \in \out(r_1)}
    &\\
\;\;\;\;\;\;\cup\; 
\setof{(a, r_2')}{(a, r_2') \in \out(r_2) \land r_1 \surd}
    & if $r = r_1 \cdot r_2$\\
\setof{(a, r'' \cdot (r')^*)}{(a, r_1') \in \out(r_1)}
    & if $r = (r')^*$
\end{array}
\right.
$$
\caption{Calculating the outgoing transitions of regular expressions.}
\label{fig:out-def}
\end{figure}

The next lemma states that $\out(r)$ correctly computes the outgoing
transitions of $r$.
\begin{lem}
Let $r \in \rexp(\Sigma)$ be a regular expression, and let $\out(r)$ be as defined 
in Figure~\ref{fig:out-def}.  Then $\out(r) = \setof{(a,r')}{r
\derives{a} r'}$.
\end{lem}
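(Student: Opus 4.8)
The plan is to proceed by structural induction on $r$, exploiting the fact that the defining clauses of $\out$ in Figure~\ref{fig:out-def} mirror the inductive clauses defining $\derives{}$ in Definition~\ref{def:derives}. For brevity write $T(r) = \setof{(a,r')}{r \derives{a} r'}$ for the set of genuine outgoing transitions; the goal in each case is to establish $\out(r) = T(r)$, with the induction hypothesis supplying $\out(s) = T(s)$ for every immediate subexpression $s$ of $r$.

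For the base cases $r \in \{\emptyset, \varepsilon\} \cup \Sigma$ no hypothesis is needed: one checks directly that the clauses of Definition~\ref{def:derives} that can fire are exactly those recorded by $\out$. For $\emptyset$ and $\varepsilon$ no clause applies, so $\out(r) = T(r) = \emptyset$; for $r = a$ the sole applicable clause is $a \derives{a} \varepsilon$, whence both sets equal $\{(a,\varepsilon)\}$. The inductive cases are where the clause-by-clause correspondence does the work. For $r = r_1 + r_2$ the two $+$-clauses give $T(r) = T(r_1) \cup T(r_2)$, which matches $\out(r) = \out(r_1) \cup \out(r_2)$ after applying the hypothesis to each summand. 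For $r = r_1 \cdot r_2$ the first $\cdot$-clause contributes $r \derives{a} r_1' \cdot r_2$ whenever $r_1 \derives{a} r_1'$, and the second contributes $r \derives{a} r_2'$ precisely when $r_1 \surd$ and $r_2 \derives{a} r_2'$; the hypothesis turns these two families into the two comprehensions of the $\cdot$ clause of $\out$, with the guard $r_1 \surd$ carried over unchanged. Finally, for $r = (r')^*$ the single $^*$-clause yields $r \derives{a} r'' \cdot (r')^*$ exactly when $r' \derives{a} r''$, so the hypothesis applied to $r'$ gives $T(r) = \setof{(a, r'' \cdot (r')^*)}{(a,r'') \in \out(r')}$, which is the intended reading of the last clause in Figure~\ref{fig:out-def}.

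I expect the friction here to be bookkeeping rather than genuine mathematical difficulty. Each inductive case really amounts to a double inclusion obtained by unfolding definitions: every transition licensed by the relevant clause of Definition~\ref{def:derives} lands in $\out(r)$, and every pair in $\out(r)$ arises from such a clause. The one point demanding care is the concatenation case, where one must confirm that the predicate $\surd$ is evaluated on $r_1$ itself rather than on a residual of $r_1$---which is exactly how both Definition~\ref{def:derives} and Figure~\ref{fig:out-def} phrase it---and that the $+$ case does not double-count transitions producible by both summands, which is automatic since $\out$ and $T$ are both sets. Because every recursive call of $\out$ is on a strict subexpression of $r$, the induction hypothesis is always available, so no real obstacle arises.
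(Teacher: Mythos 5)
Your proof is correct and follows exactly the route the paper takes: the paper's own proof is simply ``By structural induction on $r$; details left to the reader,'' and your case-by-case matching of the clauses of $\out$ in Figure~\ref{fig:out-def} against those of Definition~\ref{def:derives} supplies precisely those omitted details, including the right handling of the $r_1\surd$ guard in the concatenation case and the evident typo in the $^*$ clause of the figure.
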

\begin{proof}
By structural induction on $r$.  The details are left to the reader.
\end{proof}

\noindent
Algorithm~\ref{alg:nfa} contains pseudo-code for computing $M_r$.  It maintains four sets.
\begin{itemize}
\item $Q$, a set that will eventually contain the states of $M_r$.
\item $F$, a set that will eventually contain the accepting states of $M_r$.
\item $\delta$, a set that will eventually contain the transition relation of $M_r$.
\item $W$, the \emph{work set}, a subset of $Q$ containing states that have not yet had their
    outgoing transitions computed or acceptance status determined.
\end{itemize}
The procedure begins by adding $r$, its input parameter, to both $Q$ and $W$.  It then repeatedly removes a state from $W$, determines if it should be added to $F$, computes its outgoing transitions and updates $\delta$ appropriately, and finally adds the target states in the outgoing transition set to both $Q$ and $W$ if they are not yet in $Q$ (meaning they have not yet been encountered in the construction of $M_r$).  The algorithm terminates when $W$ is empty.

\begin{algorithm}[t]
  \SetKwInOut{Input}{Input}
  \SetKwInOut{Output}{Output}
  \DontPrintSemicolon 

  \textbf{Algorithm} \textit{NFA}$(r)$\;
  \Input{Regular rexpression $r \in \rexp(\Sigma)$}
  \Output{NFA $M_r = (Q, \Sigma, q_I, \delta, F)$}
  
    $Q := \{ r \}$ \tcp*{State set}
    $q_I := r$ \tcp*{Start state}
    $W := \{ r \}$ \tcp*{Working set}
    $\delta := \emptyset$ \tcp*{Transition relation}
    $F := \emptyset$ \tcp*{Accepting states}

    \While{$W \neq \emptyset$}{
      choose $r' \in W$\;
      $W := W - \{r'\}$\;
      \uIf{$r' \surd$ }{
        $F := F \cup \{r'\}$  \tcp*[f]{$r'$ is an accepting state}
      }
      $T = \out(r')$  \tcp*{Outgoing transitions of $r'$}
      $\delta := \delta \cup \setof{r', a, r'')}{(a, r'') \in T}$ \tcp*{Update transition relation}
      
      \ForEach{$(a, r'') \in T$}{
        \uIf{$r'' \not\in Q$}{ 
          $Q := Q \cup \{ r'' \}$  \tcp*{$r''$ is a new expression}
          $W := W \cup \{ r'' \}$
        }
      }
    }

    \Return $M_r = (Q, \Sigma, \delta, q_I, F)$
  \caption{Algorithm for computing NFA $M_r$ from regular expression $r$}\label{alg:nfa}

\end{algorithm}

Figure~\ref{fig:nfa-example}
gives the NFA resulting from applying the procedure to $(abb + a)^*$.  Figure~\ref{fig:nfae-example}, by way of contrast, shows  the result of applying the routine in~\cite{hopcroftMU2006} to produce a NFA-$\varepsilon$ from the same regular expression.

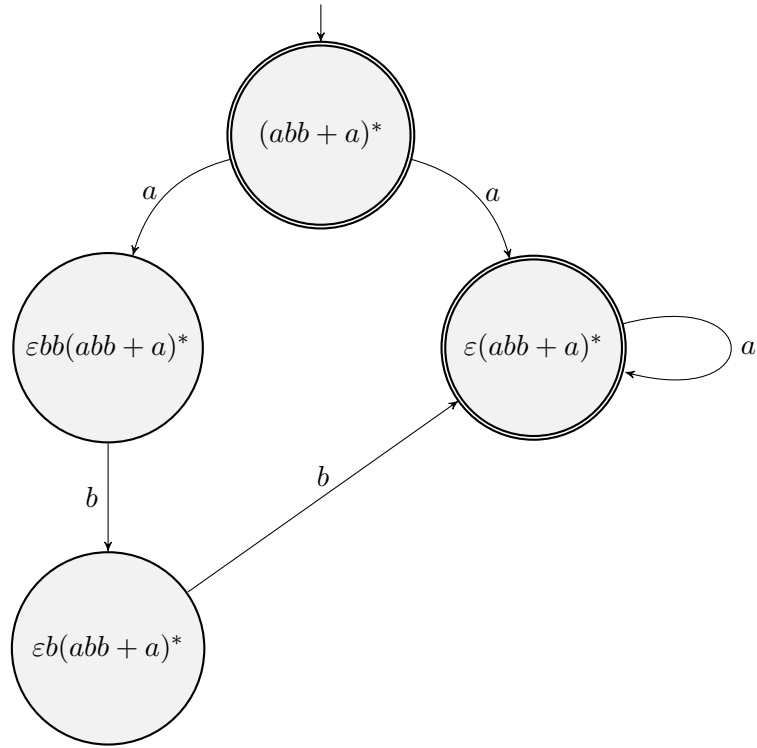
\begin{figure}
\tikzset{
  ->, 
  >=stealth', 
  node distance=4cm, 
  every state/.style={thick, fill=gray!10} 
}
\centering
\begin{tikzpicture}
	\node[state, initial, initial where=above, accepting] (q0) {$\;\;(abb + a)^*\;\;$};
	\node[state, below left of=q0] (q1) {$\varepsilon bb (abb + a)^*$};
	\node[state, below of=q1] (q2) {$\;\varepsilon b (abb + a)^*\;$};
	\node[state, accepting, below right of=q0] (q3) {$\;\varepsilon (abb + a)^*\;$};

	\draw	(q0) edge[bend right, left] node{$a$} (q1)
			(q0) edge[bend left, right] node{$a$} (q3)
			(q1) edge[left] node{$b$} (q2)
			(q2) edge[above] node{$b$} (q3)
			(q3) edge[loop right, right] node{$a$} (q3);
	
\end{tikzpicture}
\caption{NFA$(r)$ for $r = (abb + a)^*$.}
\label{fig:nfa-example}
\end{figure}

\begin{figure}
\centering
\begin{tikzpicture}
\tikzset{
  ->, 
  >=stealth', 
  node distance=2cm, 
  every state/.style={thick, fill=gray!10} 
}
	\node[state,initial] (10) {};
	\node[state, below of=10] (8) {};
	\node[state, below left of=8] (0) {};
	\node[state, below of=0] (1) {};
	\node[state, below of=1] (2) {};
	\node[state, below of=2] (3) {};
	\node[state, below of=3] (4) {};
	\node[state, below of=4] (5) {};
	\node[state, below right of=8] (6) {};
	\node[state, below of=6] (7) {};
	\node[state, below right of=5] (9) {};
	\node[state, accepting, right of=9] (11) {};

	\draw
		(8) edge[left] node{$\varepsilon$} (0)
		(8) edge[right] node{$\varepsilon$} (6)
		(0) edge[left] node{$a$} (1)
		(1) edge[left] node{$\varepsilon$} (2)
		(2) edge[left] node{$b$} (3)
		(3) edge[left] node{$\varepsilon$} (4)
		(4) edge[left] node{$b$} (5)
		(6) edge[right] node{$a$} (7)
		(5) edge[left] node{$\varepsilon$} (9)
		(7) edge[right] node{$\varepsilon$} (9)
		(9) edge[left] node{$\varepsilon$} (8)
		(9) edge[above] node{$\varepsilon$} (11)
		(10) edge[right] node{$\varepsilon$} (8)
		(10) edge[bend left, right] node{$\varepsilon$} (11)
	;
\end{tikzpicture}
\caption{NFA-$\varepsilon$ for $(abb + a)^*$.}
\label{fig:nfae-example}
\end{figure}
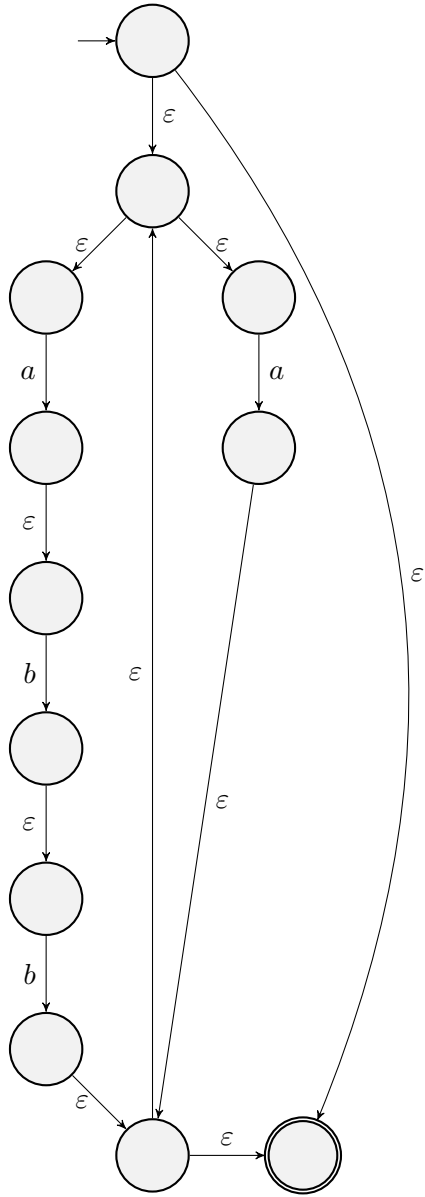

\section{Discussion}

The title of this note is ``Better Automata through Process Algebra,'' and I want to revisit it in order to explain in what respects I regard the method presented in here as producing ``better automata.''  Earlier I identified the following motivations that prompted me to incorporate this approach in my classroom instruction.

\begin{itemize}
\item
I wanted to produce NFAs rather than NFA-$\varepsilon$s.  In large part this was due to my desire not cover the notion of NFA-$\varepsilon$.  The only place this material is used in typical automata-theory textbooks is as a vehicle for converting regular expressions into finite automata.  By giving a construction that avoids the use of $\varepsilon$-transitions, I could avoid covering NFA-$\varepsilon$s and devote the newly freed lecture time to other topics.  Of course, this is only possible if the NFA-based construction does not require more time to describe than the introduction of NFA-$\varepsilon$ and the NFA-$\varepsilon$ construction.
\item
I wanted the construction to be one that students could apply during an exam to generate finite automata from regular expressions.  The classical construction found in~\cite{hopcroftMU2006} and other books fails this test, in my opinion; while the inductive definitions are mathematically pleasing, they yield automata with too many states for students to be expected to apply them in a time-constrained setting.
\item
Related to the preceding point, I wanted a technique that students could imagine being implemented and used in the numerous applications to which regular expressions are applied.  In such a setting, fewer states is better than more states, all things considered.
\end{itemize}

This note has attempted to argue these points by giving a construction in Definition~\ref{def:Mr} for constructing  NFAs directly from regular expressions.  Theorem~\ref{thm:num-states} estabishes that the number of states in these NFAs is at most one larger than the size of the regular expression from which the NFAs are generated; this provides guidance in preparing exam questions, as the size of the NFAs students can be asked to generate are tightly bounded by the size of the regular expression given in the exam.  Finally, Algorithm~\ref{alg:nfa} gives a ``close-to-code'' account of the construction that hints at its implementability.  Indeed, several years ago a couple of students that I presented this material to independently implemented the algorithm.

Beyond the points mentioned above, I think this approach has two other points in its favor.  The first is that is provides a basis for defining other operators over regular expressions and proving that the class of regular languages is closed with result to these operations.  The ingredients for introducing such a new operator and proving closure of regular languages with respect to it can be summarized as follows.
\begin{enumerate}
\item
Extend the definition of $\lang(r)$ given in Definition~\ref{def:rexp-language} to give a language-theoretic semantics for the operator.
\item
Extend the definitions of $\surd$ and $\derives{}$ in Definitions~\ref{def:surd} and~\ref{def:derives} to give a small-step operations semantics for the operator.
\item
Extend the proofs of Lemmas~\ref{lem:surd-lemma} and~\ref{lem:derives-lemma} to establish connections between the language semantics and the operational semantics.
\item
Prove that expressions extended with the new operator yield finite sets of reachable expressions.
\end{enumerate}
All of these steps involve adding new cases to the existing definitions and lemmas, and altering Theorem~\ref{thm:num-states} in the case of the last point.  Once these are done, Algorithm~\ref{alg:nfa}, with the definition of $\out$ given in Figure~\ref{fig:out-def} suitably modified to cover the new operator, can be used as is as a basis for constructing NFAs from these extended classes of regular languages.

I have used parts of this approach in the classroom to ask students to prove that synchronous product and interleaving operators can be shown to preserve language regularity.  Other operators, such as ones from process algebra, are also candidates for these kinds of questions.

The second feature of the approach in this paper that I believe recommends it is that the NFA construction is ``on-the-fly''; the construction of a automaton from a regular expression does not require the \emph{a priori} construction of automata from subexpressions, meaning that the actual production of the automaton can be intertwined with other operations, such as the checking of whether a word belongs to the regular expression's language.  One does not need to wait the construction of the full automaton, in other words, before putting it to use.

Criticisms that I have heard of this approach center around two issues.  The first is that the construction of NFA $M_r$ from regular expression $r$ does not use structural induction on $r$, unlike the classical constructions in e.g.~\cite{hopcroftMU2006}.  I do not have much patience with the complaint, as the concepts that $M_r$ is built on, namely $\surd$ and $\derives{}$, are defined inductively, and the results proven about them require substantial use of induction.  The other complaint is that the notion of $r \derives{a} r'$ is ``hard to understand.''  It is indeed the case that equipping regular expressions with an operational semantics is far removed from the language-theoretic semantics typically given to these expressions.  That said, I would argue that the small-step operational semantics considered here in fact exposes the essence of the relationship between regular expressions and finite automata:  this semantics enables regular expressions to be executed, and in a way that can be captured via automata.

I close this section with a brief discussion of the Berry-Sethi algorithm~\cite{berry1986regular}, which is used in practice and produces deterministic finite automata.  This feature enables their technique to accommodate complementation, an operation with respect to which regular languages are closed but which fits uneasily with NFAs.  From a pedagogical perspective, however, the algorithm suffers somewhat as number of states in a DFA can be exponentially larger than that size of the regular expression from which it is derived.  A similar criticism can be made of other techniques that rely on Brzozowsky derivatives~\cite{brzozowski1964derivatives}, which also produce DFAs.  There are interesting connections between our operational semantics and these derivatives, but we exploit nondeterminacy to keep the sizes of the resulting finite automata small.

\section{Conclusions and Directions for Future Work}

In this note I have presented an alternative approach for converting regular expressions into finite automata.  The method relies on defining an operational semantics for regular expressions, and as such draws inspiration from the work on process algebra undertaken by pioneers in that field, including Jos Baeten.  In contrast with classical techniques, the construction here does not require transitions labeled by the empty word $\varepsilon$, and it yields automata whose state sets are proportional in size to the regular expressions they come from.  The procedure can also be implemented in an on-the-fly manner, meaning that the production of the automaton can be intertwined with other analysis procedures as well.

Other algorithms studied in process algebra also have pedagogical promise, in my opinion.  One method, the Kanellakis-Smolka algorithm for computing bisimulation equivalence~\cite{kanellakis1990ccs}, is a case in point.  Partition-refinement algorithms for computing langauge equivalence of deterministic automata have been in existence for decades, but the details underpinning them are subtle and difficult to present in an undergraduate automata-theory class, where instructional time is at a premium.  While not as efficient asymptotically as the best procedures, the simplicity of the K-S technique recommends it, in my opinion, both for equivalence checking and state-machine minimization.  Simulation-checking algorithms~\cite{henzinger1995computing} can also be used as a basis for checking language containment among finite automata; these are interesting because they do not require determinization of both automata being compared, in general.

\bibliographystyle{alpha}
\bibliography{local}

\appendix
\section{SOS Rules for \texorpdfstring{$\surd$}{Acceptance} and \texorpdfstring{$\derives{}$}{Transitions}}\label{app}

Here are the inference rules used to define $\surd$.  They are given in the form
\[
\postrule{ \textit{premises} }{ \textit{conclusion} }{}
\]
with $-$ denoting an empty list of premises.
\[
\postrule{ - }{ \varepsilon \surd}{}
\;\;\;\;
\postrule{ - }{ r^* \surd }{}
\;\;\;\;
\postrule{ r_1 \surd }{ (r_1 + r_2) \surd }{}
\;\;\;\;
\postrule{ r_2 \surd }{ (r_1 + r_2) \surd }{}
\;\;\;\;
\postrule{ r_1 \surd \;\;\;\; r_2 \surd }{ (r_1 \cdot r_2) \surd }{}
\]

\noindent
Next are the rules for $\derives{}$.
\[
\postrule{-}{a \derives{a} \varepsilon}{}
\;\;\;\;
\postrule{r_1 \derives{a} r_1'}{r_1 + r_2 \derives{a} r_1'}{}
\;\;\;\;
\postrule{r_2 \derives{a} r_2'}{r_1 + r_2 \derives{a} r_2'}{}
\]
\vspace{6pt}
\[
\postrule{r_1 \derives{a} r_1'}{r_1 \cdot r_2 \derives{a} r_1' \cdot r_2}{}
\;\;\;\;
\postrule{r_1 \surd \;\;\;\; r_2 \derives{a} r_2'}{r_1 \cdot r_2 \derives{a} r_2'}{}
\;\;\;\;
\postrule{r \derives{a} r'}{r^* \derives{a} r' \cdot (r^*)}{}
\]

\end{document}